\documentclass[authoryear,a4paper, 12pt]{elsarticle}
\usepackage{natbib}
\usepackage{amsthm}
\usepackage{amssymb}
\usepackage{graphicx,amsmath} 
\usepackage{tikz}
\usepackage{mathtools}
\usepackage{natbib}
\usepackage{booktabs,xcolor}
\usepackage{graphics}
\usepackage{enumitem}
\usepackage{subcaption}
\usepackage{caption}
\usepackage{setspace}
\usepackage[colorlinks=true,linkcolor=blue, urlcolor=blue, citecolor=blue, breaklinks]{hyperref}
\journal{the Journal of Mathematical Economics}

\newtheorem{proposition}{Proposition}

\theoremstyle{definition}

\newtheorem{definition}{Definition}

\newcommand\fnote[1]{\captionsetup{font=footnotesize}\caption*{#1}}
\usepackage{etoolbox}
\makeatletter
\patchcmd{\ps@pprintTitle}
  {Preprint submitted}
  {Preprint resubmitted}
  {}{}
\makeatother
\patchcmd{\emailauthor}{(#2)}{}{}{}
\patchcmd{\urlauthor}{(#2)}{}{}{}

\newcommand{\EE}{\mathbb{E}}

\newcommand{\rk}{\mathit{rk}} 
\begin{document}
\begin{frontmatter}

\title{Social Integration in Two-Sided Matching Markets}
\author{Josu\'e Ortega\corref{mycorrespondingauthor}\footnote{This paper is based on the third chapter of my PhD thesis. I am particularly indebted to Herv\'e Moulin for his advice and comments on previous drafts. I acknowledge helpful comments from two anonymous referees, Christopher Chambers, Julien Combe, Bram Driesen, Patrick Harless, Takashi Hayashi, Olivier Tercieux, Bumin Yenmez, and seminar participants at ZEW Mannheim. I am indebted to Sarah Fox, Christina Loennblad and Erin for proofreading the paper. Any errors are mine alone. A previous version of this paper appeared under the title ``Can everyone benefit from social integration?".}}
\ead{ortega@zew.de}
\address{Center for European Economic Research (ZEW), Mannheim, Germany.\\University of Essex, Colchester, UK.}

\begin{abstract}
When several two-sided matching markets merge into one, it is inevitable that some agents will become worse off if the matching mechanism used is stable. I formalize this observation by defining the property of integration monotonicity, which requires that every agent becomes better off after any number of matching markets merge. Integration monotonicity is also incompatible with the weaker efficiency property of Pareto optimality.

Nevertheless, I obtain two possibility results. First, stable matching mechanisms never hurt more than one half of the society after the integration of several matching markets occurs. Second, in random matching markets there are positive expected gains from integration for both sides of the market, which I quantify. 
\end{abstract}

\begin{keyword}
social integration \sep integration monotonicity \sep matching schemes.\\
{\it JEL Codes:} C78.
\end{keyword}
\end{frontmatter}

\newpage
\setcounter{footnote}{0}
\section{Introduction}
\label{sec:introduction}
The (stable) marriage problem (MP), proposed by \cite{gale1962}, combines a rich mathematical structure with important real-life applications. The MP consists of two groups of agents, men and women. Each man wants to marry a woman, and each woman wants to marry a man. Both men and women have a strict preference ordering over the set of potential spouses. The marriage terminology is a metaphor for arbitrary two-sided matching problems, such as the assignment of workers to firms, students to universities, doctors to hospitals, and so on.

Gale and Shapley were interested in systematic rules to decide who should marry whom.  They referred to such rules as matching mechanisms. They recognized that the outcome of a matching mechanism should be {\it ``in accordance with agents' preferences''} and should satisfy {\it ``some agreed-upon criterion of fairness''}. Many such criteria can be formulated. Two standard criteria are stability and Pareto optimality. I refer to these requirements as efficiency criteria.

Other desirable criteria require that matching mechanisms incentivize the integration of different MPs ({\it communities}) into an extended MP ({\it the society}). I refer to these as pro-integration criteria. Matching mechanisms that satisfy pro-integration criteria would be particularly useful in practice because they would promote the operation of a large centralized clearinghouse instead of several small ones. For example, they would encourage hospitals to voluntarily enrol all patient-donor pairs in multi-hospital kidney exchange programs, instead of conducting kidney exchanges locally \citep{ashlagi2014,toulis2015}. They would also encourage charter and district-run schools to collaborate in the assignment of students to educational institutions \citep{manjunath2016,dogan2017,ekmekci2018}. Furthermore, they could ease the merger of school districts, which is an ongoing process in the US, given that 80\% of racial and ethnic segregation in its public schools occur between, but not within, school districts \citep{clotfelter2004,hafalir2017}.

In all the aforementioned cases, hospitals, schools, and school districts could increase the efficiency of their matching procedures by working together. Hospitals could increase the total number of successful kidney exchanges by using a multi-hospital clearinghouse, whereas schools and school districts could fill more seats using a centralized admission process that, at the same time, makes the final allocation of students to schools more diverse. Nevertheless, the aforementioned literature has documented that institutions are reluctant to participate in large centralized clearinghouses, partly because they anticipate that their new allocations after integration occurs will be worse than those they might obtain by matching within their own community only.

My first goal in this paper is to determine which efficiency criteria are compatible with pro-integration requirements. For this purpose, I define two intuitive pro-integration criteria. The first is {\it weak integration monotonicity (WIM)}. It requires that, whenever all disjoint communities merge as one, every agent weakly prefers the outcome obtained in an integrated society to the one obtained in a segregated one. For example, given a society with three communities $A$, $B$, and $D$, WIM requires that the matching obtained in society $A\cup B\cup D$ is weakly preferred by every agent to the one obtained in each community $A$, $B$, or $D$ alone. It is natural to expect that all communities will agree to integrate as a unified society whenever they use a matching mechanism that is WIM. 

Although WIM is a very intuitive pro-integration criterion, whether it is satisfied or not crucially depends on how we define the communities. Even though WIM may be satisfied for a society with communities $A$, $B$, and $D$, it may be violated when the communities become $\{A \cup B\}$ and $D$. If WIM is not satisfied in this latter case, should we expect the whole society to integrate, or should only communities $A$ and $B$ merge? Our second pro-integration property guarantees that integration occurs for every combination of communities. I call this property {\it integration monotonicity (IM)}. It ensures that complete social integration will always take place in a society, no matter the order in which the communities merge. 

In the previous example with three communities, IM requires that the matching obtained in societies $A\cup B$, $A \cup D$, and $B\cup D$ is weakly preferred by every agent to the one obtained in each society $A$, $B$, or $D$ alone. Furthermore, it also requires that the matching obtained in the completely integrated society $A \cup B \cup D$ is weakly preferred by everybody to that obtained in societies $A\cup B$, $A\cup D$, and $B\cup D$. Although IM is a stronger and less intuitive property than WIM, it is desirable because it guarantees that once full integration has been achieved, no subset of communities will want to separate from the rest and match only within themselves.\footnote{\cite{niederle2003} document this phenomenon in the matching of American gastroenterologists to hospitals. Even though local matching markets managed to merge into a centralized clearinghouse in 1986, some hospitals started defecting in 1994, until the centralized clearinghouse was completely abandoned in 2000.} In other words, whenever IM is violated, we can re-arrange the communities in a society so that WIM is also violated. 

Stability is incompatible even with WIM (Proposition \ref{thm:prop1}). This is not surprising given the well-known result establishing that adding a man to an MP makes every existing man weakly worse off. However, it is interesting that stable matching mechanisms never harm more than half of the society after the integration of communities occurs (Proposition \ref{thm:prop3}). This upper bound is tight, yet simulation results with random preferences suggest that the expected fraction of agents hurt by integration is considerably smaller, in fact consistently around one quarter of the society. It is also interesting that the much weaker efficiency property of Pareto optimality is incompatible with IM (Proposition \ref{thm:prop2}).

My second goal in this paper is to quantify the expected gains or losses from integration under stable matching mechanisms. I measure these welfare changes as the difference in the ranking of agents' spouses before and after integration occurs. For example, if I obtained my 5th best possible partner after integration occurred, but before this I obtained my 2nd best possible partner, there is a difference here of 3 positions, which shows that there are losses from integration. Societies may achieve social integration if the welfare losses, if any, are small compared to the size of the society, even in the absence of stable and pro-integration matching mechanisms.

I find that in large MPs, both men and women benefit from integration when agents' preferences over all potential spouses are drawn independently and uniformly at random ({\it a random MP}). In a random MP with $\kappa$ communities, each with $n$ men and $n$ women, the expected gains from integration for men and women under the men-optimal stable matching are asymptotically equivalent to $\log(n^{\kappa-1}/\kappa$) and $\kappa n \left( \frac{1}{\log n} - \frac{1}{\log \kappa n} \right)$, respectively. These expressions are derived in Proposition \ref{thm:prop4} and are good approximations even for MPs with only a few agents and communities. These approximations show that women particularly benefit from integration when the men-optimal stable matching is selected before and after integration occurs.

This paper proceeds as follows. Section \ref{sec:literature} discusses the related literature. Section \ref{sec:model} presents the model. Section \ref{sec:theory} establishes the compatibility between efficiency and pro-integration properties. Section \ref{sec:welfare} discusses the expected gains from integration. Finally, Section \ref{sec:conclusion} concludes.

\section{Related Literature}
\label{sec:literature}

This paper contributes to a recent body of literature studying the challenges of integration in different types of matching markets. \cite{manjunath2016}, \cite{dogan2017} and \cite{ekmekci2018} study the integration of different school admission processes. In their models, the set of students in two different communities is the same, i.e. communities are not disjoint. This implies that some students may receive two school offers, while some may get none. To ensure that all students receive at least one offer, \cite{manjunath2016} propose iteratively rematching students to schools. Although rematching may be costly, they show that a few iterations yield large welfare gains. Using a similar approach, \cite{dogan2017} show that students are weakly better off when all schools join a centralized clearinghouse. The previous two papers take as given that schools will merge their admission policies, and thus do not analyze the schools' incentives to integrate. 

In contrast, \cite{ekmekci2018} focus on the incentives of each school in joining a centralized clearinghouse. In their paper, a school has incentives to integrate if it gets better students participating in a centralized clearinghouse rather than running its own admission policy after all other schools have joined the aforementioned centralized clearinghouse. Using this sequential notion of integration, they obtain results similar to mine: no school has incentives to integrate. They consider restrictions on agents' preferences and particular policies that can make integration easier to achieve. 

\cite{hafalir2017} use the slightly different framework of matching with contracts to study the integration of school districts. A district is similar to a community in that it contains schools and students and in that all districts are disjoint. They show that if the choice functions of school districts respect initial endowments or a property they call {\it favoristic}, then integration under stable matching mechanisms does not harm a single student. Although this paper is close in spirit to theirs, they have different objectives. Their paper focuses on identifying conditions on schools' choice functions that guarantee that integration occurs, whereas this paper analyzes instead how integration can be achieved by computing the expected gains from integration in a random matching market. 

In the different context of kidney exchange, \cite{ashlagi2014} use a one-sided matching framework to study the incentives of hospitals to fully reveal their patient-donor pairs to a centralized clearinghouse. They call a matching mechanism individually rational for a hospital if it always matches more agents of that hospital after integration occurs. They show that in one-sided matching problems, there is no efficient matching mechanism that is maximal in the number of matchings created and, at the same time, is individually rational for every community. They and \cite{toulis2015} study mechanisms that are individually rational for multi-hospital kidney exchange, and show that in a large market, their proposed mechanism is asymptotically individually rational.

All the aforementioned results are inspired by the well-known monotonicity result which establishes that whenever a woman is added to a MP, every man becomes weakly better off. Similarly, adding an extra woman makes every existing woman weakly worse off. This result extends to many-to-one MPs (Theorem 5 in \citealp{kelso1982}, Theorems 1 and 2 in \citealp{crawford1991}, and Theorems 2.25 and 2.26 in \citealp{roth1992}). \citet{ashlagi2017} describe the exact magnitude of these welfare changes in random MPs. \cite{toda2006} uses this monotonicity property to characterize the set of stable outcomes. 

Related notions of integration monotonicity have been explored in cooperative game theory. In particular, \cite{sprumont1990} defines {\it population monotonicity}, a more general property that allows groups of arbitrary size to integrate. It requires that whenever two communities of arbitrary size join as one, there exists a way of sharing the surplus generated by its members so that every agent becomes weakly better off. He provides a characterization of population monotonicity using monotonic games with veto players. Sprumont's work only applies to games with transferable utility. This large class of games does not include the MP. My notion of matching schemes is inspired by his notion of assignment schemes.

More recently, \cite{chambers2017} use the population monotonicity axiom (which they call integration monotonicity) to study exchange economies. The definition of integration monotonicity that I use corresponds to theirs. They find that an allocation rule satisfies Pareto optimality and integration monotonicity only if the order in which communities integrate matters. Interestingly, they show that whenever integration occurs, it must hurt at least a third of the society if equals are treated equally.

The term population monotonicity is sometimes used to refer to the solidarity axiom, introduced by \cite{thomson1983}. This axiom is suitable for scenarios in which a society produces a fixed amount of welfare, no matter how many agents belong to it. Examples of those include bargaining and fair division. The solidarity axiom requires that the burden imposed by a new agent should be shared by all existing members of a society. Population monotonicity and solidarity are similar concepts but have very different interpretations.

\section{Model}
\label{sec:model}

A {\it community} $C$ is a set of men and women, and contains at least one person of each gender. There are $\kappa$ disjoint communities. Given a set of communities, the {\it society} $S$ is the set of all communities. A {\it population} $P \subseteq S$ is a set of communities in $S$. $M^P$ and $W^P$ denote the sets of men and women in each community $C \in P$. $C_x$ denotes the community to which person $x$ belongs. As it will cause no confusion, I will use the notation $x \in P$ whenever $\exists \, C \in P$ such that $x \in C$, i.e. whenever a person $x$ belongs to a community in population $P$.

Each man $m$ (resp. woman $w$) has strict preferences over the set of all women in the society $W^S$ (resp. men $M^S$).\footnote{For simplicity, I assume that each man (resp. woman) prefers matching with any woman (resp. man) rather than remaining alone.} I write $w \succ_m  w'$ to denote that $m$ prefers $w$ to $w'$. Similarly, $w \succcurlyeq_m w'$ if either $w \succ_m w'$ or $w = w'$. I represent women's preferences using the same notation. Given a population $P$, I call $\succ_P \coloneqq (\succ_x)_{x \in (M^P \cup W^P)}$ a {\it preference profile} of the agents belonging to the communities in $P$. An {\it extended MP} is a triple $(M^S,W^S;\succ_S)$. 

Given a population $P$, a {\it matching} $\mu:(W^P \cup M^P) \mapsto (W^P \cup M^P$) is a function of order two ($\mu^2(x)=x)$ such that if $\mu(m) \neq m$, then $\mu(m) \in W^P$, and if $\mu(w) \neq w$, then $w \in M^P$. An agent is matched to himself if he remains unmatched. A {\it matching scheme} $\sigma: (M^S \cup W^S) \times 2^S \mapsto (M^S \cup W^S)$ is a function that specifies a matching $\sigma(\cdot,P)$ for every $P \in 2^S$, i.e. $\sigma (\cdot,P): (M^P \cup W^P) \mapsto (M^P \cup W^P)$. A {\it matching mechanism} $\Gamma$ is a function $\succ \mapsto \Gamma(\succ)$ from the set of all preference profiles to the set of all matching schemes.

To give an example of an extended MP and a matching scheme, consider a society with two communities, $A$ and $B$, each with one man and one woman. Agents' preferences are such that the man from community $A$ ($m^A$) prefers the woman from community $A$ ($w^A$) over the woman from community $B$ ($w^B$). The preferences of the remaining agents are summarized below
\begin{eqnarray*}
w^A \succ_{m^A} w^B &\quad& m^B \succ_{w^A} m^A\\
w^A \succ_{m^B} w^B &\quad& m^B \succ_{w^B} m^A
\end{eqnarray*}
A possible matching scheme $\sigma$ is $\sigma(m^A,A)=w^A$, $\sigma(m^B,B)=w^B$, $\sigma(m^A,S)=w^B$ and $\sigma(m^B,S)=w^A$. We will return to this example in Proposition \ref{thm:prop1}.

\subsection{Efficient Matching Schemes}

I consider two well-known efficiency properties. The first is stability. Besides its intuitive appeal, the concept of stability is a good predictor of the success of several real-life matching mechanisms \citep{roth1992}.  

\begin{definition}[Stability] 
A matching $\mu:(W^P \cup M^P) \mapsto (W^P \cup M^P)$ is stable if there is no man $m \in M^P$ and no woman $w \in W^P$ that are not married to each other $(\mu(m) \neq w)$ such that $w  \succ_m \mu(m)$ and $m \succ_w \mu(w)$. Any such pair ($m,w$) is called a blocking pair.
\end{definition}

A weaker efficiency property is Pareto optimality. It is arguably the most basic efficiency consideration in economics. It only requires that there is no way of making one agent better off without hurting any other agent.

\begin{definition}[Pareto optimality] 
A matching $\mu:(W^P \cup M^P) \mapsto (W^P \cup M^P)$ is Pareto optimal if there is no other matching $\mu'$ such that $\mu'(x) \succcurlyeq_x \mu(x)$ for every agent $x \in (W^P \cup M^P)$, and $\mu'(y) \succ_y \mu(y)$ for some agent $y \in (W^P \cup M^P)$.
\end{definition}

The properties of matchings trivially extend to matching schemes and mechanisms. A matching scheme $\sigma$ is stable (resp. Pareto optimal) if the matching $\sigma(\cdot,P)$ is stable (resp. Pareto optimal) in every population $P \subseteq S$. A matching mechanism $\Gamma$ is stable (resp. Pareto optimal) if the matching scheme $\Gamma(\succ)$ is stable (resp. Pareto optimal) with respect to the preference profile $\succ$.

\subsection{Pro-Integration Matching Schemes}
I introduce two pro-integration properties. The first is weak integration monotonicity (WIM). It requires that no agent is hurt whenever all communities integrate.

\begin{definition}[Weak Integration Monotonicity]
A matching scheme $\sigma$ is WIM if $\forall C \in S$ and $\forall x \in (M^C \cup W^C$), $\sigma (x,S) \succcurlyeq_x \sigma(x,C_x)$.
\end{definition}

WIM matching schemes ensure that every agent is weakly better off after communities have merged, and thus it encourages the complete integration of disjoint communities.\footnote{WIM is stronger requirement than the individual rationality property in \cite{ashlagi2014}, which requires that the number of agents matched from each community weakly increases after the integration of all communities. Individual rationality is satisfied if all agents marry the worst possible partner after integration occurs. The number of marriages crated after integration is (weakly) higher, but (some) agents are clearly worse off.}

A stronger concept is integration monotonicity (IM). It requires that no agent is hurt, whenever any two disjoint populations integrate.

\begin{definition}[Integration Monotonicity]
A matching scheme $\sigma$ is IM if $\forall P,P' \in 2^S$ such that $P \cap P' = \emptyset$ and $\forall x \in (M^P \cup W^P)$, $\sigma(x,P \cup P') \succcurlyeq_x \sigma(x,P)$.
\end{definition}

A matching mechanism $\Gamma$ is IM (resp. WIM) if the matching scheme $\Gamma(\succ)$ is IM (resp. WIM) with respect to the preference profile $\succ$. 

An IM matching scheme guarantees two important properties. First, that complete social integration will occur independently of the order in which we merge the communities (the order may affect who marries whom, but integration always makes agents better off with respect to matching only within a sub-population of the society). Second, that after the integration of all communities has taken place, there is no subset of communities that wish to separate from the rest of the society to match within themselves instead. These two desiderata justify the definition of IM.

Before presenting my results, I discuss the relationship between efficiency and pro-integration properties. One may conjecture that integration monotonicity combined with Pareto optimality implies stability. This conjecture is natural given similar existing results in the literature.\footnote{In cooperative games with transferable utility, population monotonicity combined with efficiency implies the core property \citep{sprumont1990}. In exchange economies, integration monotonicity combined with efficiency implies the core property \citep{chambers2017}.} However, this conjecture is false. Consider a society with two communities. One has a pretty man and an ugly woman, whereas the other one has an ugly man and a pretty woman. The matching scheme that always marries people to those in their own community is integration monotonic and Pareto optimal (if ugly people prefer pretty over ugly people). However, this matching scheme is not stable: the pretty man and woman prefer each other to their spouses.

\section{Results}
\label{sec:theory}
It would be ideal if we could construct a stable and IM matching mechanism. Unfortunately, we lack even a stable and WIM matching mechanism. 

\begin{proposition}
\label{thm:prop1}
For each society with at least two communities, no stable matching mechanism is weakly integration monotonic.
\end{proposition}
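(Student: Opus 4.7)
The plan is to construct, for any society with at least two communities, a single preference profile under which the set of stable matchings in each single-community population is a singleton, the set of stable matchings in the integrated society $S$ is a singleton, and at least one agent is strictly worse off in the integrated matching. Uniqueness on both sides forces every stable matching mechanism to produce this WIM violation on this profile, which is exactly what the proposition asserts.

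I would begin with the two-community case using the example already displayed in the paper: $A = \{m^A, w^A\}$ and $B = \{m^B, w^B\}$ with $w^A \succ_{m^A} w^B$, $w^A \succ_{m^B} w^B$, $m^B \succ_{w^A} m^A$, $m^B \succ_{w^B} m^A$. Within $A$ (resp.\ $B$) only one matching exists and it is trivially stable, so any stable mechanism must return $\sigma(m^A,A) = w^A$ and $\sigma(w^B,B) = m^B$. In $A \cup B$, the pair $(m^B, w^A)$ mutually top-rank each other, hence must be matched in every stable matching; this forces the residual pair $(m^A, w^B)$, yielding the unique stable matching $\{(m^B,w^A),\, (m^A,w^B)\}$. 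Since $w^A \succ_{m^A} w^B$, agent $m^A$ strictly prefers $\sigma(m^A,A)$ to $\sigma(m^A,S)$, violating WIM (symmetrically, so does $w^B$).

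For an arbitrary society with $\kappa > 2$ communities, I would embed the four key agents above into two distinguished communities with the preferences above, extended so that each of them ranks every non-key outsider strictly below his or her two original candidates. The remaining agents are then paired into mutually top-ranked partners inside their own communities, and each non-key agent ranks any own-community member strictly above every one of the four key agents. Mutual top-ranking pins down every non-key pair in every stable matching of every population containing it, and the preference separation rules out blocking pairs across the key/non-key boundary. The residual subproblem on $\{m^A, w^A, m^B, w^B\}$ is then exactly the two-community example, and the violation at $m^A$ persists. The main obstacle throughout is uniqueness of the stable matching in both the single-community populations and in $S$; without it a stable mechanism could escape by choosing a different stable matching. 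The mutual top-ranking (``soulmates'') device handles this cleanly, since such pairs are locked in by the definition of stability itself.
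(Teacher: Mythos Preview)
Your argument is correct and uses exactly the paper's example for the two-community core; the paper's own proof in fact stops at that minimal instance (two communities, one man and one woman each) and does not spell out the reduction from an arbitrary society, whereas you make the soulmate embedding explicit. One small caveat on your extension: pairing the non-key agents into mutual top choices presumes each community is gender-balanced after the key agents are removed, but putting any surplus agents at the bottom of everyone else's list prevents them from blocking and the rest of your argument then goes through unchanged.
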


\begin{proof} Let $A$ and $B$ be the two communities, each with one man and one woman. Consider the following preference profile $\succ$
\begin{eqnarray*}
w^A \succ_{m^A} w^B &\quad& m^B \succ_{w^A} m^A\\
w^A \succ_{m^B} w^B &\quad& m^B \succ_{w^B} m^A
\end{eqnarray*}

All women prefer $m^B$ and all men prefer $w^A$. Stability of a matching scheme requires that $\sigma(w^A,A)=m^A$, $\sigma(w^B,B)=m^B$, and $\sigma(w^A,A \cup B)=m^B$. However, both man $m^A$ and woman $w^B$ obtain a worse partner when communities $A$ and $B$ merge. Therefore, for the preference profile $\succ$, any stable matching mechanism $\Gamma$ produces a matching scheme $\Gamma(\succ)$ that is not WIM.
\end{proof}

Proposition \ref{thm:prop1} is an expected result, given the well-known result that states that adding a man makes every woman weakly worse off, discussed in the Introduction. However, Proposition \ref{thm:prop3} shows that, although stable matching mechanisms cannot guarantee that all agents are weakly better off after integration occurs, they never harm more than one half of the agents in the society.

\begin{proposition}
\label{thm:prop3}
In any stable matching scheme $\sigma^*$,
\begin{equation*}
\left\vert \{x \in (M^S \cup W^S) : \sigma^*(x,C_x) \succ_x \sigma^*(x,S)\} \right\vert \leq \frac{\left\vert M^S \cup W^S \right\vert}{2}
\end{equation*}
The bound is tight.
\end{proposition}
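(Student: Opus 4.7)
My plan rests on the observation that every agent hurt by integration can be paired with a distinct agent of the opposite sex who strictly benefits from integration, via a stability (no-blocking-pair) argument.

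\textbf{Step 1: The key swap.} Suppose man $m$ is strictly hurt, i.e.\ $\sigma^*(m,C_m) \succ_m \sigma^*(m,S)$. The assumption that everyone prefers any partner to being single forces $m$ to actually be matched in his community; write $w := \sigma^*(m,C_m)$, so $w \in C_m = C_w$. Because $\sigma^*(\cdot,S)$ is stable and $m$ strictly prefers $w$ to his integrated spouse, the pair $(m,w)$ cannot block $\sigma^*(\cdot,S)$, so $\sigma^*(w,S) \succ_w m = \sigma^*(w,C_w)$. Hence $w$ is strictly helped by integration. The symmetric argument runs from hurt women to helped men.

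\textbf{Step 2: Injection and counting.} Let $M_h$ and $W_h$ denote the sets of hurt men and women, and let $M_+, W_+$ denote the strictly helped men and women. The map $m \mapsto \sigma^*(m,C_m)$ sends $M_h$ into $W_+$, and it is injective because $\sigma^*(\cdot,C)$ is a matching within each community (two distinct hurt men in the same community have distinct community-spouses, and two in different communities trivially have spouses in different communities). Symmetrically, $w \mapsto \sigma^*(w,C_w)$ injects $W_h$ into $M_+$. Therefore
\begin{equation*}
|M_h| + |W_h| \;\le\; |W_+| + |M_+|.
\end{equation*}
Since $M_h, M_+$ are disjoint subsets of $M^S$ and $W_h, W_+$ are disjoint subsets of $W^S$, the total $(|M_h|+|W_h|) + (|M_+|+|W_+|)$ is at most $|M^S \cup W^S|$, so the number of hurt agents is at most half of the society.

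\textbf{Step 3: Tightness.} The $2\times 2$ example from the proof of Proposition \ref{thm:prop1} already exhibits the bound: both $m^A$ and $w^B$ are strictly hurt by integration, giving exactly $2$ hurt out of $4$ agents. (One can replicate this example across disjoint blocks to produce larger tight instances.)

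The only substantive step is Step 1: identifying the correct matching between hurt and helped agents. Once the stability-based swap is in hand, Steps 2 and 3 are essentially bookkeeping. A minor subtlety I would address explicitly is the corner case of unmatched agents in their own community, which is ruled out by the assumption that being single is worst for everyone, so any hurt agent is necessarily matched in his or her community.
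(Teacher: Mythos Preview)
Your proof is correct and follows essentially the same route as the paper: map each hurt agent to his or her community spouse, use stability of $\sigma^*(\cdot,S)$ to show that spouse must be strictly helped, and then count. The paper phrases this via a tripartition $S^0,S^+,S^-$ and directly argues $|S^-|\le|S^+|$, while you split by gender and spell out the injectivity explicitly, but there is no substantive difference in idea or structure; tightness is established by the same $2\times2$ example from Proposition~\ref{thm:prop1}.
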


\begin{proof}
Let us partition $S$ into three sets $S^0$, $S^+$ and $S^-$, defined as
\begin{eqnarray*}
S^0&\coloneqq&\{ x \in (M^S \cup W^S) : \sigma^*(x,S) =\sigma^*(x,C_x) \}\\
S^+&\coloneqq&\{ x \in (M^S \cup W^S) :  \sigma^*(x,S) \succ_x \sigma^*(x,C_x)\}\\
S^-&\coloneqq&\{ x \in (M^S \cup W^S) :  \sigma^*(x,C_x) \succ_x \sigma^*(x,S)\}
\end{eqnarray*}

so that $S^0$ is the set of people who keep the same partner after integration, $S^+$ are those who prefer their ``integrated'' partner, and $S^-$ are those who prefer their ``segregated'' partner.

Consider an arbitrary couple ($x,\sigma^*(x,C_x)$). If $x \in S^-$, $\sigma^*(x,C_x) \in S^+$ because otherwise ($x,\sigma^*(x,C_x)$) constitutes a blocking pair to the matching $\sigma^*(\cdot, S)$, contradicting the fact that $\sigma^*$ is a stable matching scheme. It follows that $\left\vert S^+ \right\vert \geq \left\vert S^- \right\vert$, and thus $\left\vert S^0 \right\vert + \left\vert S^+ \right\vert \geq \left\vert S^- \right\vert$. It follows that $\left\vert \{x \in (M^S \cup W^S) : \sigma^*(x,S) \succcurlyeq_x \sigma^*(x,C_x)\} \right\vert \geq \left\vert M^S \cup W^S \right\vert/2$, completing the proof. The example used in the proof of Proposition \ref{thm:prop1} shows that the one-half bound is tight.
\end{proof}

Proposition \ref{thm:prop1} shows that stability cannot be combined with pro-integration criteria. However, Proposition \ref{thm:prop2} shows that even the considerably weaker efficiency property of Pareto optimality is also incompatible with IM. 

\begin{proposition}
\label{thm:prop2}
For each society with at least three communities, no Pareto optimal matching mechanism is integration monotonic.
\end{proposition}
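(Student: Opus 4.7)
The plan is to construct a preference profile on a society with three communities, each containing one man and one woman, such that PO uniquely determines the matching on every two-community subpopulation, and these forced matchings impose contradictory IM constraints on the full society. For societies with more than three communities or with larger communities, the argument extends by singling out three ``active'' communities and pairing every other agent into a locked couple whose two members top-rank each other and are bottom-ranked by everyone else; such couples are forced to be matched together under PO in every population containing them, and therefore play no role in the IM analysis.

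For the minimal case I label the three active communities $A$, $B$, $D$ and assign cyclic Latin-square preferences in which each agent ranks his or her own-community partner last: $m_A$ ranks $w_B \succ w_D \succ w_A$, $m_B$ ranks $w_D \succ w_A \succ w_B$, $m_D$ ranks $w_A \succ w_B \succ w_D$, and symmetrically $w_A$ ranks $m_B \succ m_D \succ m_A$, $w_B$ ranks $m_D \succ m_A \succ m_B$, $w_D$ ranks $m_A \succ m_B \succ m_D$. In each two-community population $\{X,Y\}$, the ``crossed'' matching strictly Pareto-dominates the ``diagonal'' one, because every one of the four agents strictly prefers the partner from the other community to his or her own-community partner. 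Hence PO uniquely selects the crossed matching on every two-community subpopulation; in particular it forces $\sigma(m_A,\{A,B\}) = w_B$ and $\sigma(m_A,\{A,D\}) = w_D$ (and analogously for $\{B,D\}$).

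I then read off two irreconcilable IM constraints on $\sigma(\cdot,S)$. Applying IM to the transition $\{A,B\} \to S$ yields $\sigma(m_A,S) \succcurlyeq_{m_A} w_B$; since $w_B$ is $m_A$'s top choice, this forces $\sigma(m_A,S) = w_B$. Applying IM to $\{A,D\} \to S$ yields $\sigma(w_D,S) \succcurlyeq_{w_D} m_A$; since $m_A$ is $w_D$'s top choice, this forces $\sigma(w_D,S) = m_A$, equivalently $\sigma(m_A,S) = w_D$. Since $w_B \neq w_D$, no matching on $S$ can satisfy both constraints, so no PO matching scheme on this profile is IM, and hence no PO matching mechanism is IM.

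The main obstacle is choosing preferences under which PO is sharp enough on every two-community subpopulation to turn IM into a binding equality constraint on $\sigma(\cdot,S)$. The cyclic Latin-square structure accomplishes exactly this, since in each two-community subpopulation it places each agent's cross-community option strictly above the own-community option, making the crossed matching the unique PO outcome and thus pinning down two incompatible partners for $m_A$ in $S$. A secondary but routine concern is the extension from the minimal three-community case to arbitrary societies with at least three communities, which the locked-pair device handles cleanly.
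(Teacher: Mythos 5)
Your proposal is correct and follows essentially the same route as the paper: the identical cyclic preference profile on three singleton-couple communities, the observation that Pareto optimality forces the crossed matching on each two-community population, and the resulting incompatible IM constraints on $\sigma(\cdot,S)$. Your locked-pair device for extending to larger societies is a reasonable added detail that the paper leaves implicit.
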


\begin{proof} Let $A$, $B$ and $D$ be the three communities, each with one man and one woman. Consider the following preference profile $\succ$
\begin{eqnarray*}
w^B \succ_{m^A}  w^D \succ_{m^A} w^A &\quad& m^B  \succ_{w^A}  m^D  \succ_{w^A}  m^A\\
w^D  \succ_{m^B}  w^A  \succ_{m^B}  w^B &\quad&  m^D  \succ_{w^B}  m^A  \succ_{w^B}  m^B\\
w^A  \succ_{m^D}  w^B  \succ_{m^D}  w^D &\quad&  m^A  \succ_{w^D}  m^B  \succ_{w^D}  m^D
\end{eqnarray*}

Agents' preferences are such that agents of community $A$ prefer those from $B$, agents from $B$ prefer those from $D$, and agents from $D$ prefer to those from $A$. All Pareto optimal matching schemes require $\sigma(w^A,A\cup B)=m^B$, $\sigma(w^B,B\cup D)=m^D$, and $\sigma(w^D,D\cup A)=m^A$. Note that in any Pareto optimal matching scheme, there is always a community that gets its first choice whenever we merge only two communities.

IM requires that when we aggregate all communities, all agents should do at least as well as when only two societies merge, namely
\begin{eqnarray*}
\sigma(m^A,S) \succcurlyeq_{m^A} w^B && \sigma(w^A,S) \succcurlyeq_{w^A} m^B\\
\sigma(m^B,S) \succcurlyeq_{m^B} w^D && \sigma(w^B,S) \succcurlyeq_{w^B} m^D\\
\sigma(m^D,S) \succcurlyeq_{m^D} w^A && \sigma(w^D,S) \succcurlyeq_{w^D} m^A
\end{eqnarray*}

This is impossible, because some agent would no longer be able to obtain her first choice. 
\end{proof}

Propositions \ref{thm:prop1} and \ref{thm:prop2} show that efficiency and pro-integration properties are at odds with each other.\footnote{IM and Pareto optimality are compatible in a society that only has two communities. This is because, for such a society, IM and WIM are equivalent.} They can only be satisfied together in their weak versions. It is obvious that {\it matching schemes that are Pareto optimal and WIM always exist.} We can obtain them by marrying each agent within their own community, and then repeatedly perform Pareto improvements on that initial matching. 

Note that populations are sets of exogenously given sets of communities. One could consider a stronger notion of integration monotonicity which requires weak improvements every time two arbitrary sets of agents (which are not sets of communities) merge. This notion is stronger than the one I consider. Consequently, the impossibility described in Proposition \ref{thm:prop2} carries over when using this stronger notion of integration monotonicity.

\section{Gains from Integration in Stable Matching Mechanisms}
\label{sec:welfare}

In this section, I quantify the welfare changes that occur after integration has taken place. I measure those changes by the difference in the rank of the spouse that agents obtain before and after integration. Because this analysis uses existing results, henceforth I will assume that each community has $n$ men and $n$ women. Albeit restrictive, this assumption allows me to isolate the effects of integration from those of having societies that are unbalanced in their gender ratio, which are discussed in detail by \cite{ashlagi2017}. This assumption implies that all agents are married in any stable matching.\footnote{I discuss the integration of unbalanced communities in Section \ref{sec:conclusion}.}

I also assume throughout that the matching that occurs before and after integration is the men-optimal stable matching (MOSM). The MOSM is a stable matching such that no man gets a better partner in any other stable matching, and it always exists.  This assumption allows me to compare agents' welfare across different stable matchings, and is also imposed by \citet{dogan2017}, \citet{hafalir2017} and \citet{ekmekci2018}. This assumption is justified in several applications. For example, the student-optimal stable matching is appealing and consistently selected in school choice \citep{abdulkadirouglu2003}. Similarly, the resident-optimal stable matching has been used to allocate medical interns to hospitals in the U.S. \citep{roth1999}. I denote by $\sigma^*$ the men-optimal matching scheme, so that $\sigma^*(\cdot, P)$ is the MOSM for the matching problem $(M^P,W^P; \succ_P)$ and $\sigma^*(\cdot,C)$ denotes the MOSM for the MP $(M^{C_i},W^{C_i},\succ_{C_i}),$ $\forall C_i \in S$.\footnote{Formally, I should write $\sigma^*_{\succ_P}$ because the MOSM depends upon the preference profile $\succ_P$. Similarly, I should write below $\rk_m(w, \succ_m)$ and $\rk_M (\mu, \succ_M)$ to denote a man's rank of women and men's average rank of wives, because they both depend on the preference ordering $\succ_m$ and preference profile $\succ_M$. However, I simplify the notation following the literature \citep[][p. 75]{ashlagi2017}, as it will be clear that these objects depend on agents' preferences.}

Now I introduce some useful definitions. Given a society, an {\it extended random MP} is generated by drawing a complete preference list for each man and each woman independently and uniformly at random. Random MPs were first studied by \cite{wilson1972}, and have been widely studied ever since. The {\it absolute rank} of a woman $w$ in the preference order of a man $m$ (over all potential spouses in the society) is defined by $\rk_m(w) \coloneqq \left\vert \{ w' \in W^S: w' \succcurlyeq_m w \} \right\vert$. Similarly, I use $\rk_w(m)$ to denote the absolute rank of $m$ in the preference order of $w$. Given a matching $\mu$, the {\it men's absolute average rank of wives} is defined by
\begin{eqnarray*}
\rk_M (\mu) &\coloneqq& \frac{1}{\left\vert M^S \right\vert} \sum_{m \in M^S} \rk_m(\mu(m))
\end{eqnarray*}

The same notation is used to denote the women's average rank of husbands $\rk_W(\mu)$. The gains from integration for men $\gamma_M$ represent the difference between the men's average rank of wives that they obtain before and after integration occurs under the MOSM. Formally,
\begin{eqnarray*}
\gamma_M &\coloneqq& \rk_M(\sigma^*(\cdot,C)) - \rk_M(\sigma^*(\cdot,S))
\end{eqnarray*}

The same notation is used to denote women's gains from integration $\gamma_W$. A higher ranking means a less desired person and thus, there are expected gains from integration for men and women whenever $\gamma_M$ and $\gamma_W$ are positive, respectively. Proposition \ref{thm:prop4} establishes that the expected gains from integration are positive for both men and women, and provides an approximation\footnote{Functions $f(n,\kappa)$ and $g(n,\kappa)$ are asymptotically equivalent $f(n,\kappa) \sim g(n,\kappa)$ if $\lim_{n,\kappa \to \infty} \frac{f(n,\kappa)}{g(n,\kappa)}=1$.} for MPs in which the number of agents and communities is large.

\begin{proposition}
\label{thm:prop4}
In a random extended MP with $\kappa$ communities, each with $n$ men and $n$ women,
\begin{eqnarray}
\label{eq:gainsm} \EE [\gamma_M] &\sim& \log\left(\frac{n^{\kappa-1}}{\kappa}\right)\\
\label{eq:gainsw} \EE[\gamma_W] &\sim& \kappa n \left( \frac{1}{\log n}-\frac{1}{\log \kappa n} \right)
\end{eqnarray}
\end{proposition}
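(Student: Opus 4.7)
The plan is to reduce both gains to the classical asymptotics for uniformly random marriage problems, and then translate local (within-set) ranks into the absolute (society-wide) ranks used in the definition of $\gamma_M$ and $\gamma_W$. The benchmark facts I would invoke are the following: in a uniformly random MP with $n$ men and $n$ women, the expected men's average rank of wives under the MOSM, measured in the local preference lists of length $n$, is asymptotically $\log n$, and the corresponding women's average rank of husbands is asymptotically $n/\log n$ (the well-known results of \citet{wilson1972} and their refinements, also used by \citet{ashlagi2017}). Applied directly to the integrated society, which is itself a random MP of size $\kappa n \times \kappa n$ whose local preference list is already the society-wide one, these yield $\EE[\rk_M(\sigma^*(\cdot, S))] \sim \log(\kappa n)$ and $\EE[\rk_W(\sigma^*(\cdot, S))] \sim \kappa n / \log(\kappa n)$.

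For the segregated benchmarks $\rk_M(\sigma^*(\cdot, C))$ and $\rk_W(\sigma^*(\cdot, C))$, each community is an independent $n\times n$ random MP, so within-community ranks behave exactly as above; the task is to translate each such rank into an absolute rank over all $\kappa n$ potential spouses. The key observation is that, because each agent's preference list over $\kappa n$ candidates is uniform, conditional on the induced ordering on his own community (which alone determines the within-community MOSM) the $(\kappa-1)n$ outside candidates are placed uniformly at random among the $\kappa n$ positions. Hence if a man's within-community wife has local rank $k$, her expected absolute rank equals $k(\kappa n + 1)/(n+1) \sim \kappa k$; the identical argument applies to women. Combining with the classical asymptotics gives $\EE[\rk_M(\sigma^*(\cdot, C))] \sim \kappa \log n$ and $\EE[\rk_W(\sigma^*(\cdot, C))] \sim \kappa n/\log n$.

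Subtracting then produces
\begin{align*}
\EE[\gamma_M] &\sim \kappa \log n - \log(\kappa n) = (\kappa-1)\log n - \log \kappa = \log\!\left(\frac{n^{\kappa-1}}{\kappa}\right),\\
\EE[\gamma_W] &\sim \frac{\kappa n}{\log n} - \frac{\kappa n}{\log \kappa n} = \kappa n\left(\frac{1}{\log n} - \frac{1}{\log \kappa n}\right),
\end{align*}
which are the two claimed equivalences. The main obstacle is justifying that the factor $\kappa$ from the local-to-absolute conversion can be pulled through the expectation and the asymptotic equivalence at the same time. This rests on two ingredients: probabilistic independence between within-community preferences (which determine the local rank of the MOSM partner) and the interleaving of outside candidates (which determines the inflation factor); and enough concentration of the within-community rank statistics around their means to make the step $\EE[\kappa R]\sim \kappa\,\EE[R]$ legitimate in the $\sim$ sense as $n,\kappa\to\infty$. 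Both ingredients are standard, but some care is needed because the joint limit must be handled uniformly in both parameters.
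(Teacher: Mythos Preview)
Your argument is correct and follows essentially the same route as the paper: invoke the Wilson--Knuth--Pittel asymptotics for the $\kappa n\times\kappa n$ integrated market directly, then for the segregated benchmark take the within-community ranks from the $n\times n$ asymptotics and inflate by the factor $(\kappa n+1)/(n+1)\sim\kappa$ coming from the uniform interleaving of outside candidates, and subtract. Your closing remarks about independence and concentration needed to pass the inflation factor through the expectation are more careful than the paper, which simply substitutes $q=\log n$ and $q=n/\log n$ into the conversion formula without further comment.
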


\begin{proof}
I start the proof with some preliminaries regarding standard MPs with $\kappa=1$. The \citet{mcvitie1971} algorithm is a modification of the deferred acceptance algorithm in which only one man proposes to a woman at each step. Let $X$ denote the number of proposals required by this algorithm to find the MOSM. Using the coupon collector problem, we obtain (see \citealp{motwani1995}; p. 58)
\begin{equation*}
\EE[X] = n \log n +O(n) \sim n \log n
\end{equation*}

Because men propose first to their most preferred women, $\EE(X)/n$ is the expected men's absolute average rank of wives, so 
\begin{equation*}
\EE[\rk_M(\sigma^*(\cdot,S))] = \log n + O(1) \sim \log n
\end{equation*}

Each woman receives $\log(n)$ proposals in expectation. The one she accepts is the one coming from her most preferred partner. The random variable $[\rk_W(\sigma^*(\cdot,S))-1]$ is binomially distributed with parameters $(n-\log n, \frac{1}{\log n+ 1})$. Therefore, 
\begin{equation*}
\EE[\rk_W(\sigma^*(\cdot,S))] = \frac{n}{\log n + 1} + O(1) \sim \frac{n}{\log n } 
\end{equation*}

These results were proven by \cite{wilson1972}, \citet{knuth1997} and \citet{pittel1989},\footnote{\cite{pittel1989} proves a stronger statement, namely that $\frac{\rk_M(\sigma^*(\cdot,S))}{\log n} \overset{p}{\to} 1$ and $\frac{\rk_W(\sigma^*(\cdot,S))}{n/\log n} \overset{p}{\to} 1$.} and are explained in detail by \cite{canny2001}. They imply that in an extended MP with $\kappa$ communities
\begin{eqnarray*}
\EE[\rk_M(\sigma^*(\cdot,S))] &\sim& \log \kappa n \\
\EE[\rk_W(\sigma^*(\cdot,S))] &\sim& \frac{\kappa n}{ \log \kappa n}
\end{eqnarray*}

The {\it relative rank} of a woman $w$ in the preference order of a man $m$ is defined by $\widehat{ \rk}_m(w) \coloneqq \left\vert \{ w' \in C_m : w' \succcurlyeq_m w \} \right\vert$. It indicates the ranking of woman $w$ with respect to all women who belong to the same community as $m$. Similarly, I denote the relative rank of $m$ in the preference order of $w$ by $\widehat{ \rk}_w(m)$. Given a matching $\mu$, the {\it men's relative average rank} of wives is defined by
\begin{eqnarray*}
\widehat{ \rk}_M (\mu) &\coloneqq& \frac{1}{\left\vert M^S \right\vert} \sum_{m \in M^S} \widehat{ \rk}_m(\mu(m))
\end{eqnarray*}

The same notation is used to denote the women's relative average rank of husbands $\widehat{ \rk}_W$. Pittel's result also implies that 
\begin{eqnarray*}
\EE[\widehat{ \rk}_M(\sigma^*(\cdot,C))] &\sim& \log n \\
\EE[\widehat{ \rk}_W(\sigma^*(\cdot,C))] &\sim& \frac{n}{ \log n}
\end{eqnarray*}

So that before integration, men and women get a partner {\it relatively} ranked $\sim n$ and $\sim \log n$ inside their own community. We need to figure out in which position these partners are in the absolute ranking of all $M^S$ and $W^S$. To answer this question, suppose that an agent is ranked $q$ among all agents in his community. A random agent from another community could be better ranked than agent 1, between agents 1 and 2, ..., between agents $q-1$ and $q$, between agents $q$ and $q+1$, and so on. Therefore, a random agent from another community is in any of those gaps with probability $1/(n+1)$ and thus has $q/(n+1)$ chances of being more highly ranked than our original agent with the relative ranking $q$. There are $n(\kappa -1)$ men from other communities. On average, $\frac{qn(\kappa-1)}{n+1}$ men will be ranked better than him. Furthermore, there were already $q$ men in his own community ranked better than him. This implies that his expected ranking is $q+\frac{qn(\kappa-1)}{n+1} = \frac{q(\kappa n+1)}{n+1} \sim q\kappa$. Substituting $q$ for $\log(n)$ and $n/\log(n)$, respectively, we obtain the expected average rank of wives and husbands before integration. It follows that 
\begin{eqnarray*}
\EE [\gamma_M] &=&\log(n)\kappa-\log(\kappa n) \sim \log\left(\frac{n^{\kappa-1}}{\kappa}\right)\\
\EE[\gamma_W] &=& \frac{n}{\log(n)} \kappa-\frac{\kappa n}{\log(\kappa n)} \sim \kappa n \left(\frac{1}{\log n}- \frac{1}{\log \kappa n} \right)
\end{eqnarray*}
\end{proof}

Admittedly, we require strong assumptions to obtain Proposition \ref{thm:prop4}. We need balanced and equally populated communities, as well as independently and uniformly distributed preferences. Although those assumptions are likely to be violated in practice, Proposition \ref{thm:prop4} sheds some light on how the most important stable matching mechanism (the MOSM) may promote social integration, despite its incompatibility with WIM. Not only does this matching mechanism never hurt more than one half of society (Proposition \ref{thm:prop3}), but it also improves the expected average ranking of men's wives and women's husbands after integration occurs (under the aforementioned assumptions). 

Although Proposition \ref{thm:prop4} is an asymptotic result, Figure \ref{fig:gains} shows that the expected gains from integration for large societies described in Proposition \ref{thm:prop4} are good approximations even for small values of $n$ and $\kappa$. Note that women benefit more from integration. The intuition behind this observation is that women obtain a partner who is ranked relatively 	very poorly before integration occurs under the MOSM. This poor relative rank is amplified when compared to the absolute rank in their complete preference order over all partners. 

\begin{figure}[htbp!]
 \caption{Gains from integration, by gender.}
	\label{fig:gains}
    \centering
    \begin{subfigure}[b]{0.475\textwidth}
		\centering
        \includegraphics[width=\textwidth]{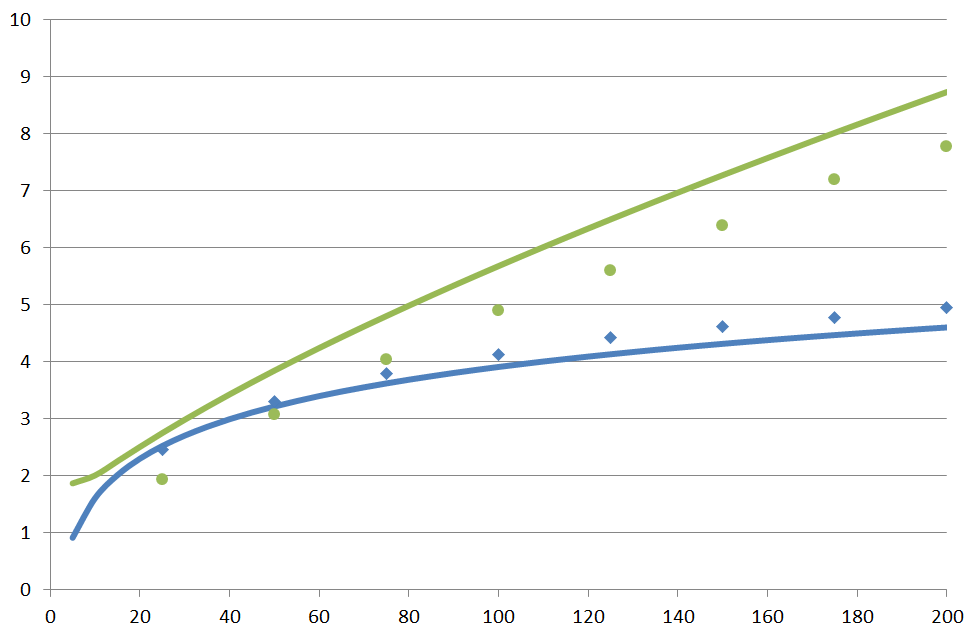}
        \caption{$r=2$}
        \label{fig:r2}
    \end{subfigure}
    ~ 
    \begin{subfigure}[b]{0.475\textwidth}
        \includegraphics[width=\textwidth]{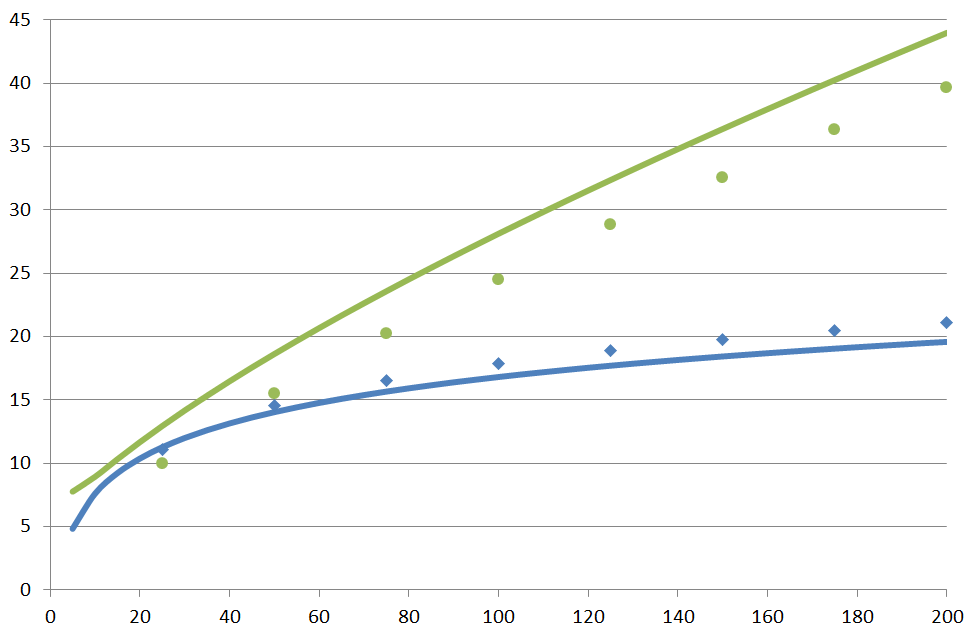}
        \caption{$r=5$}
        \label{fig:r5}
    \end{subfigure}
		\fnote{The lines denote the asymptotic expressions \ref{eq:gainsm} and \ref{eq:gainsw}. The dots represent the average gains from 10,000 simulations. Horizontal axis: $n$, vertical axis: (expected) gains from integration. Blue observations correspond to men, whereas green ones correspond to women.}
\end{figure}

Before concluding the paper, I describe some characteristics regarding those agents who are hurt by integration. These observations come from simulation exercises described in detail in the Appendix. First, the number of women hurt by integration under the MOSM is larger than the corresponding number of men (women 74\%, men 26\%  with $n=500,\kappa=5$). Second, the magnitude of their losses is very different. Women are severely hurt (average loss of 303 ranking places with $n=500,\kappa=5$), whereas men suffer a moderate loss at most (average loss of 8 ranking places). Women's losses become smaller when their preferences are correlated. 

The magnitude of the losses of agents hurt by integration becomes smaller compared to the size of the problem for both men and women. However, I find it very interesting that the fraction of society hurt by integration remains relatively constant around 25\%, even when preferences are correlated. 

\section{Conclusion}
\label{sec:conclusion}

Although it is impossible to guarantee that the integration of disjoint matching markets will benefit everyone, the integration process never hurts more than half of the agents if a stable matching is systematically chosen. Furthermore, there are positive expected gains from integration in random matching markets, which are larger for the side of the market that receives the proposals in the deferred acceptance algorithm.

Two extensions of this paper are:

{\it 1. Many-to-one Matching.} The results presented in Section \ref{sec:model} extend to the many-to-one case when agents have responsive preferences. In this case, every many-to-one problem has an equivalent formulation where, instead of assigning students to colleges, students are assigned to college seats \citep{roth1992}. Proposition \ref{thm:prop3}, which establishes that no more than half of the agents are hurt after integration occurs, is now interpreted as half of colleges' seats and students. In a many-to-one framework, integration can hurt more than half of the students and colleges, even when preferences are responsive. 

It is difficult to approximate the gains from integration in many-to-one matching problems, because the classical results from Wilson, Knuth, and Pittel that I use in the proof of Proposition \ref{thm:prop4} only apply to one-to-one matching problems.

{\it 2. Unbalanced Societies.} All results in Section \ref{sec:model} extend to the case of societies that are	 unbalanced in their gender ratio. Regarding the expected gains from integration, \cite{ashlagi2017} show that the spouse rank for men and women in the MOSM roughly reverses if there are more men than women. Therefore, if we merge several societies, all in which there are more men than women, the expected gains from integration roughly reverse as well. 

What happens if we merge some communities with more men than women, and some with more women than men? It is significantly harder to compute the gains from integration, which depend on the size of the gender imbalance in each society. However, if a male-dominated community (with more men than women) merges with a female-dominated one, men from the male-dominated community experience larger expected gains from integration than men from the female-dominated community. Similarly, women from the female-dominated community obtain larger gains from integration, as compared to women from the male-dominated community. 

I leave a deeper analysis of integration of many-to-one and unbalanced matching markets for future research.

\section*{Appendix}

I present the detailed simulations discussed in Section \ref{sec:conclusion}. In all cases, the results presented are averages over 1,000 simulations. The numbers in parentheses indicate standard errors. The corresponding code is available at \href{www.josueortega.com}{www.josueortega.com}. The simulations were run at the High Performance Computing facilities at the University of Glasgow. I present the simulation results using three tables.

\begin{table}[!ht]
\begin{center}
\caption[caption,justification=centering]{Percentage of the society that prefers segregation}
\label{tab:tab1}
\begin{tabular}{lccc}
\hline
\hline
$\kappa \backslash n$  & 50 & 100& 500\\
\hline
2 & 25.4 {\small [40,60]} & 25.8 {\small [38,62]} &25.5 {\small [36,64]}\\
  &  \footnotesize  (0.04) & \footnotesize  (0.02) & \footnotesize  (0.01)\\
3 & 25.4 {\small [35,65]}& 25.8 {\small [34,66]}& 26 {\small [30,70]}\\
 & \footnotesize  (0.03) & \footnotesize  (0.01) & \footnotesize  (0.01)\\
4 & 24.8 {\small [32,68]} & 25.1{\small [29,71]}&25.6 {\small [26,74]}\\
 &  \footnotesize  (0.02) &  \footnotesize  (0.01) & \footnotesize  (0.01)\\
5 & 24.3 {\small [28,72]} & 24.6 {\small [26,74]}& 25.1 {\small [26,74]} \\
 & \footnotesize  (0.02) & \footnotesize  (0.01) & \footnotesize  (0.01)\\
\hline
\hline
\end{tabular}
\end{center}
\end{table}

Table \ref{tab:tab1} describes the fraction of the society that is hurt by integration for several combinations of parameters. The division of this group between men and women appears in brackets, i.e. for $n=50$ and $\kappa=2$, 25.4\% of society becomes worse off after integration occurs. 40\% of these 25.4\% of society are men, whereas the remaining 60\% are women. I find it surprising that the fraction of the society that gets hurt by integration remains relatively constant around 25\%. It is also interesting that in larger matching problems, the fraction of men hurt by integration becomes smaller.

\begin{table}[!ht]
\centering
\caption[caption,justification=centering]{Average welfare loss experienced by people who prefer segregation, by gender.}
\label{tab:welfareloss}
\begin{tabular}{lrlrlrl}
\hline
\hline
$\kappa \backslash n$ & \multicolumn{2}{c}{50} & \multicolumn{2}{c}{100}& \multicolumn{2}{c}{500}\\
& men & women & men & women & men & women \\
\hline
2 & 4.9	&19.7&	5.7&	34.9& 7.4 & 136.6 \\
 &  \footnotesize (0.91)  &  \footnotesize (9.47) & \footnotesize (0.95)  & \footnotesize (22.91) & \footnotesize (1.04) & \footnotesize (246.83)\\
3 & 5.4	&27.4&	6.2&	49.2&7.9&193.8\\
 & \footnotesize (1.01)  & \footnotesize (14.57) & \footnotesize (0.96)  & \footnotesize (39.22) & \footnotesize (0.83) & \footnotesize (409.86)\\
4 & 5.7	&35&	6.5&	62.1&  8.1 &250.8\\
 & \footnotesize (1.08)  & \footnotesize (21.67) & \footnotesize (1.03)  & \footnotesize (60.48) & \footnotesize (0.84) & \footnotesize (623.95)\\
5 & 6	&41.8&	6.8&	74.9&8.4&303.3\\
 & \footnotesize (1.07) & \footnotesize (28.26) & \footnotesize (1.14)  & \footnotesize (84.86) & \footnotesize (0.88) & \footnotesize (709.15)\\
\hline
\hline
\end{tabular}
\end{table}

Table \ref{tab:welfareloss} describes the average losses experienced by those hurt by integration after all communities merge. Men suffer very moderate welfare losses with the MOSM, whereas women who are hurt by integration suffer large reductions in the ranking of their spouses. Interestingly, for both men and women, the welfare losses divided by the size of the problem $\kappa n$ seem to converge to zero. This observation suggests that in large markets, the welfare losses eventually become negligible.

\begin{table}[!htbp]
\begin{center}
\caption[caption,justification=centering]{Statistics for correlated preferences, $n=100$, $\kappa=2$.}
\label{tab:corr}
\begin{tabular}{lcrlrl}
\hline
\hline
$\rho$ & Fraction   & \multicolumn{2}{c}{Exp. ranking}& \multicolumn{2}{c}{Welfare loss}\\
	& worse off & men & women  & men & women         \\
		& (1)  & \multicolumn{2}{c}{(2)}& \multicolumn{2}{c}{(3)}\\

	\hline
0.9   & 24.6       & 105.6            & 93.9          & 30.5            & 31.9 \\
      & \footnotesize (0.02)       & \footnotesize (59.22)            & \footnotesize (50.78)          & \footnotesize (17.75)            & \footnotesize (24.36) \\
0.7   & 26.1        & 114.3           & 89.2          & 17.2            & 34.7\\
      & \footnotesize (0.02)       & \footnotesize (25.51)            & \footnotesize (20.07)          & \footnotesize (8.45)             & \footnotesize (25.87) \\
0.5   & 26.1        & 109.7           & 93.4          & 10.9             & 34.8 \\
      & \footnotesize (0.02)       & \footnotesize (15.81)             & \footnotesize (11.85)          & \footnotesize (3.46)             & \footnotesize (22.99)  \\
0.3   & 25.9        & 104.4             & 97.5           & 7.8             & 34.1 \\
      & \footnotesize (0.02)       & \footnotesize (6.46)             & \footnotesize (4.43)           & \footnotesize (1.83)              & \footnotesize (21.68) \\
0.1   & 25.8        & 101.1           & 100.1         & 6.1              & 34.1 \\
      & \footnotesize (0.02)       & \footnotesize (1.05)             & \footnotesize (0.62)           & \footnotesize (1.24)             & \footnotesize (22.35) \\			
\hline
0   & 25.8       & 100.6           & 100.5         & 5.7              & 34.9 \\
      & \footnotesize (0.02)       & \footnotesize (0.22)             & \footnotesize (0.1)           & \footnotesize (0.95)             & \footnotesize (22.91) \\
\hline
\hline
\end{tabular}
\end{center}
\end{table}

Finally, Table \ref{tab:corr} computes the same variables described in Tables \ref{tab:tab1} and \ref{tab:welfareloss} but imposing correlation in agents' preferences. Correlated preferences are evident in some matching environments like school choice. I introduce correlation in preferences as follows. I define a status quo in preferences for both men and women. The status quo is a random order over all possible partners. Each agent's preferences are identical to the status quo, except perhaps in at most $c$ positions. For example, if $c=2$ and the status quo over six partners is $1\succ 2\succ 3\succ 4\succ 5\succ 6$, an agent's preferences could be $1\succ 2\succ 6\succ 4\succ 5\succ 3$ but not $2\succ 3\succ 1\succ 4 \succ 5 \succ 6$. The swaps in agents' preferences are chosen randomly. The expected correlation coefficient between each agents' preferences and the status quo equals $\rho=1-\frac{c}{\kappa n}$. The simulation results using correlated preferences appear in Table \ref{tab:corr}.\footnote{There are several ways of introducing correlation in preferences, e.g. \cite{caldarelli2001}, \cite{boudreau2010}, \cite{hafalir2013}, \cite{lee2017}, and \cite{che2018}.}

I find that the fraction of people against integration still remains around 25\% (column 1). Moreover, the expected ranking of women hurt by integration increases, whereas that of men decreases (column 2). This finding suggests that the proposer's advantage becomes larger with correlated preferences. This advantage achieves its peak with $\rho=0.7$, but declines when the preferences become more correlated. This is due to the fact that the size of the set of stable matching collapses when the preferences are highly correlated \citep{holzman2014}.\footnote{The set of stable matchings even becomes a singleton for a variety of highly correlated preferences  \citep{eeckhout2000,clark2006,ortega2017}.} For the same reason, men's welfare losses become larger and comparable to those suffered by women as $\rho$ increases (column 3).

\newpage


\begin{thebibliography}{34}
\newcommand{\enquote}[1]{``#1''}
\expandafter\ifx\csname natexlab\endcsname\relax\def\natexlab#1{#1}\fi

\bibitem[\protect\citeauthoryear{Abdulkadiro{\u{g}}lu and
  S{\"o}nmez}{Abdulkadiro{\u{g}}lu and S{\"o}nmez}{2003}]{abdulkadirouglu2003}
\textsc{Abdulkadiro{\u{g}}lu, A. and T.~S{\"o}nmez} (2003): \enquote{School
  choice: A mechanism design approach,} \emph{American Economic Review}, 93,
  729--747.

\bibitem[\protect\citeauthoryear{Ashlagi, Kanoria, and Leshno}{Ashlagi
  et~al.}{2017}]{ashlagi2017}
\textsc{Ashlagi, I., Y.~Kanoria, and J.~Leshno} (2017): \enquote{Unbalanced
  random matching markets: The stark effect of competition,} \emph{Journal of
  Political Economy}, 125, 69--98.

\bibitem[\protect\citeauthoryear{Ashlagi and Roth}{Ashlagi and
  Roth}{2014}]{ashlagi2014}
\textsc{Ashlagi, I. and A.~Roth} (2014): \enquote{Free riding and participation
  in large scale, multi-hospital kidney exchange,} \emph{Theoretical
  Economics}, 9, 817--863.

\bibitem[\protect\citeauthoryear{Boudreau and Knoblauch}{Boudreau and
  Knoblauch}{2010}]{boudreau2010}
\textsc{Boudreau, J. and V.~Knoblauch} (2010): \enquote{Marriage matching and
  intercorrelation of preferences,} \emph{Journal of Public Economic Theory},
  12, 587--602.

\bibitem[\protect\citeauthoryear{Caldarelli and Capocci}{Caldarelli and
  Capocci}{2001}]{caldarelli2001}
\textsc{Caldarelli, G. and A.~Capocci} (2001): \enquote{Beauty and distance in
  the stable marriage problem,} \emph{Physica A: Statistical Mechanics and its
  Applications}, 300, 325--331.

\bibitem[\protect\citeauthoryear{Canny}{Canny}{2001}]{canny2001}
\textsc{Canny, J.} (2001): \enquote{Lecture notes in Combinatorics and Discrete
  Probability,} UC Berkeley,
  \url{https://people.eecs.berkeley.edu/~jfc/cs174/lecs/lec7/lec7.pdf}. Last
  visited on 2018/03/01.

\bibitem[\protect\citeauthoryear{Chambers and Hayashi}{Chambers and
  Hayashi}{2017}]{chambers2017}
\textsc{Chambers, C. and T.~Hayashi} (2017): \enquote{Can everyone benefit from
  economic integration?} \emph{unpublished}.

\bibitem[\protect\citeauthoryear{Che and Tercieux}{Che and
  Tercieux}{2018}]{che2018}
\textsc{Che, Y.-K. and O.~Tercieux} (2018): \enquote{{Efficiency and stability
  in large matching markets},} \emph{Journal of Political Economy},
  forthcoming.

\bibitem[\protect\citeauthoryear{Clark}{Clark}{2006}]{clark2006}
\textsc{Clark, S.} (2006): \enquote{{The uniqueness of stable matchings},}
  \emph{The B.E. Journal of Theoretical Economics}, 6, 1--28.

\bibitem[\protect\citeauthoryear{Clotfelter}{Clotfelter}{2004}]{clotfelter2004}
\textsc{Clotfelter, C.} (2004): \emph{After ``Brown'': The rise and retreat of
  school desegregation}, Princeton University Press.

\bibitem[\protect\citeauthoryear{Crawford}{Crawford}{1991}]{crawford1991}
\textsc{Crawford, V.} (1991): \enquote{Comparative statics in matching
  markets,} \emph{Journal of Economic Theory}, 54, 389--400.

\bibitem[\protect\citeauthoryear{Do\u{g}an and Yenmez}{Do\u{g}an and
  Yenmez}{2017}]{dogan2017}
\textsc{Do\u{g}an, B. and B.~Yenmez} (2017): \enquote{Unified enrollment in
  school choice: How to improve student assignment in Chicago,} \emph{SSRN
  preprint}.

\bibitem[\protect\citeauthoryear{Eeckhout}{Eeckhout}{2000}]{eeckhout2000}
\textsc{Eeckhout, J.} (2000): \enquote{On the uniqueness of stable marriage
  matchings,} \emph{Economics Letters}, 69, 1--8.

\bibitem[\protect\citeauthoryear{Ekmekci and Yenmez}{Ekmekci and
  Yenmez}{2018}]{ekmekci2018}
\textsc{Ekmekci, M. and B.~Yenmez} (2018): \enquote{Integrating schools for
  centralized admissions,} \emph{SSRN preprint}.

\bibitem[\protect\citeauthoryear{Gale and Shapley}{Gale and
  Shapley}{1962}]{gale1962}
\textsc{Gale, D. and L.~Shapley} (1962): \enquote{College admissions and the
  stability of marriage,} \emph{American Mathematical Monthly}, 69, 9--15.

\bibitem[\protect\citeauthoryear{Hafalir and Yenmez}{Hafalir and
  Yenmez}{2017}]{hafalir2017}
\textsc{Hafalir, I. and B.~Yenmez} (2017): \enquote{Integrating school
  districts: Diversity, balance, and welfare,} \emph{SSRN preprint}.

\bibitem[\protect\citeauthoryear{Hafalir, Yenmez, and Yildirim}{Hafalir
  et~al.}{2013}]{hafalir2013}
\textsc{Hafalir, I., B.~Yenmez, and M.~Yildirim} (2013): \enquote{Effective
  affirmative action in school choice,} \emph{Theoretical Economics}, 8,
  325--363.

\bibitem[\protect\citeauthoryear{Holzman and Samet}{Holzman and
  Samet}{2014}]{holzman2014}
\textsc{Holzman, R. and D.~Samet} (2014): \enquote{Matching of like rank and
  the size of the core in the marriage problem,} \emph{Games and Economic
  Behavior}, 88, 277--285.

\bibitem[\protect\citeauthoryear{Kelso and Crawford}{Kelso and
  Crawford}{1982}]{kelso1982}
\textsc{Kelso, A. and V.~Crawford} (1982): \enquote{Job matching, coalition
  formation, and gross substitutes,} \emph{Econometrica}, 50, 1483--1504.

\bibitem[\protect\citeauthoryear{Knuth}{Knuth}{1997}]{knuth1997}
\textsc{Knuth, D.} (1997): \emph{Stable marriage and its relation to other
  combinatorial problems}, no.~10 in CRM Proceedings \& Lecture Notes, AMS.

\bibitem[\protect\citeauthoryear{Lee}{Lee}{2017}]{lee2017}
\textsc{Lee, S.} (2017): \enquote{Incentive compatibility of large centralized
  matching markets,} \emph{Review of Economic Studies}, 84, 444.

\bibitem[\protect\citeauthoryear{Manjunath and Turhan}{Manjunath and
  Turhan}{2016}]{manjunath2016}
\textsc{Manjunath, V. and B.~Turhan} (2016): \enquote{Two school systems, one
  district: What to do when a unified admissions process is impossible,}
  \emph{Games and Economic Behavior}, 95, 25--40.

\bibitem[\protect\citeauthoryear{McVitie and Wilson}{McVitie and
  Wilson}{1971}]{mcvitie1971}
\textsc{McVitie, D. and L.~Wilson} (1971): \enquote{The stable marriage
  problem,} \emph{Communications of the ACM}, 14, 486--490.

\bibitem[\protect\citeauthoryear{Motwani and Raghavan}{Motwani and
  Raghavan}{1995}]{motwani1995}
\textsc{Motwani, R. and P.~Raghavan} (1995): \emph{Randomized algorithms}, New
  York, NY, USA: Cambridge University Press.

\bibitem[\protect\citeauthoryear{Niederle and Roth}{Niederle and
  Roth}{2003}]{niederle2003}
\textsc{Niederle, M. and A.~Roth} (2003): \enquote{Unraveling reduces mobility
  in a labor market: Gastroenterology with and without a centralized match,}
  \emph{Journal of Political Economy}, 111, 1342--1352.

\bibitem[\protect\citeauthoryear{Ortega and Hergovich}{Ortega and
  Hergovich}{2017}]{ortega2017}
\textsc{Ortega, J. and P.~Hergovich} (2017): \enquote{The strength of absent
  ties: Social integration via online dating,} \emph{arXiv preprint
  1709.10478}.

\bibitem[\protect\citeauthoryear{Pittel}{Pittel}{1989}]{pittel1989}
\textsc{Pittel, B.} (1989): \enquote{The average number of stable matchings,}
  \emph{SIAM Journal on Discrete Mathematics}, 2, 530--549.

\bibitem[\protect\citeauthoryear{Roth and Peranson}{Roth and
  Peranson}{1999}]{roth1999}
\textsc{Roth, A. and E.~Peranson} (1999): \enquote{The redesign of the matching
  market for American physicians: Some engineering aspects of economic design,}
  \emph{American Economic Review}, 89, 748--780.

\bibitem[\protect\citeauthoryear{Roth and Sotomayor}{Roth and
  Sotomayor}{1992}]{roth1992}
\textsc{Roth, A. and M.~Sotomayor} (1992): \emph{Two-sided matching: A study in
  game-theoretic modeling and analysis}, Econometric Society Monographs,
  Cambridge University Press.

\bibitem[\protect\citeauthoryear{Sprumont}{Sprumont}{1990}]{sprumont1990}
\textsc{Sprumont, Y.} (1990): \enquote{Population monotonic allocation schemes
  for cooperative games with transferable utility,} \emph{Games and Economic
  Behavior}, 2, 378--394.

\bibitem[\protect\citeauthoryear{Thomson}{Thomson}{1983}]{thomson1983}
\textsc{Thomson, W.} (1983): \enquote{The fair division of a fixed supply among
  a growing population,} \emph{Mathematics of Operations Research}, 8,
  319--326.

\bibitem[\protect\citeauthoryear{Toda}{Toda}{2006}]{toda2006}
\textsc{Toda, M.} (2006): \enquote{Monotonicity and consistency in matching
  markets,} \emph{International Journal of Game Theory}, 34, 13.

\bibitem[\protect\citeauthoryear{Toulis and Parkes}{Toulis and
  Parkes}{2015}]{toulis2015}
\textsc{Toulis, P. and D.~Parkes} (2015): \enquote{Design and analysis of
  multi-hospital kidney exchange mechanisms using random graphs,} \emph{Games
  and Economic Behavior}, 91, 360--382.

\bibitem[\protect\citeauthoryear{Wilson}{Wilson}{1972}]{wilson1972}
\textsc{Wilson, L.} (1972): \enquote{An analysis of the stable marriage
  assignment algorithm,} \emph{BIT Numerical Mathematics}, 12, 569--575.

\end{thebibliography}
\end{document}